\documentclass[prx,twocolumn,superscriptaddress,longbibliography]{revtex4-2}
\pdfoutput=1

\usepackage{amsmath,braket,amssymb}
\usepackage{amsthm}
\usepackage[final]{graphicx}
\usepackage{subfigure}
\usepackage{xcolor}
\usepackage[english]{babel}
\usepackage{color}
\usepackage{mathtools}
\usepackage{physics}
\usepackage{empheq}
\usepackage{tensor}
\usepackage{quantikz}
\usepackage{hhline}
\usepackage{times}

\usepackage[colorlinks=true,citecolor=teal,linkcolor=teal,urlcolor=teal]{hyperref}

\usepackage[normalem]{ulem}
\normalfont

\usepackage[linesnumbered, ruled,vlined]{algorithm2e}

\graphicspath{{./images/},{./imagesAppendix/}}

\definecolor{forestgreen}{rgb}{0.08, 0.4, 0.13}
\definecolor{darkBlue}{rgb}{0.08, 0.13, 0.4}

\definecolor{THc}{rgb}{0.9,0.3,0.2}

\newtheorem{theorem}{Theorem}
\newtheorem{corollary}{Corollary}
\newtheorem{definition}{Definition}

\DeclareMathOperator{\SWAP}{SWAP}

\renewcommand{\eqref}[1]{Eq.~(\ref{#1})}

\begin{document}

\title{Exponential speedup in measurement property learning with post-measurement states}

\author{Zhenhuan Liu}
\email{qubithuan@gmail.com}
\thanks{equal contribution}
\noaffiliation

\author{Qi Ye}
\email{yeq22@mails.tsinghua.edu.cn}
\thanks{equal contribution}
\affiliation{Center for Quantum Information, IIIS, Tsinghua University, Beijing, China}
\affiliation{Shanghai Qi Zhi Institute, Shanghai, China}

\author{Zhenyu Cai}
\email{cai.zhenyu.physics@gmail.com }
\affiliation{Department of Engineering Science, University of Oxford, Parks Road, Oxford OX1 3PJ, United Kingdom}

\author{Jens Eisert}
\email{jense@zedat.fu-berlin.de}
\affiliation{Dahlem Center for Complex Quantum Systems, Freie Universität Berlin, 14195 Berlin, Germany}

\begin{abstract}

Learning properties of quantum states and channels is known to benefit from resources such as entangled operations, auxiliary qubits, and adaptivity, whereas the resource structure of measurement learning, namely, learning properties of quantum measurement operators, remains poorly understood. 
In this work, we identify a measurement learning task for which access limited to classical measurement outcomes leads to an exponential lower bound on the query complexity, established via a distinguishing task between a genuine quantum projective measurement and a purely classical random number generator. Remarkably, this hardness persists even when arbitrary entangled operations, auxiliary systems, and fully adaptive strategies are allowed, indicating that conventional resources for state and channel learning are ineffective in this task. 
In contrast, when access to the post-measurement quantum state is available, the same task can be solved with constant query complexity using a simple measuring-twice protocol, without requiring resources that are useful for state and channel learning.
Our results reveal post-measurement states as a qualitatively new and decisive resource for measurement learning, suggesting potential implications for the design of practical quantum certification protocols.

\end{abstract}
	
\maketitle

\section{Introduction}
Every quantum information processing task can essentially be decomposed into three stages: State initialization, quantum evolution, and measurement. 
Learning the properties of these stages underlies a wide range of tasks in quantum information processing, and hence substantial effort is dedicated to this task in various settings, let this be in notions of 
benchmarking~\cite{hashim2025benchmarking}, certification~\cite{Eisert2020certification,PRXQuantum.2.010201} and others.
Owing to the intrinsic constraints of quantum mechanics, such as the interplay of 
disturbance and information gain,
extracting information from quantum systems is subject to fundamental limits~\cite{Helstrom1976detection,Anshu2024survey}. 
Identifying the precise quantum resources that can reduce these costs is, therefore, of both foundational and practical importance in the quantum technologies. For quantum state and channel learning, extensive progress has been made in this direction, revealing resources that yield polynomial or even exponential query complexity speedups. 
Representative examples include entangled operations for estimating state purity and Pauli expectation values~\cite{Aharonov2022algorithmic,chen2021memory,huang2021demonstrating,chen2024tradeoff,Gong2024samplecomplexity,kingTriplyEfficientShadow2025}, auxiliary qubits and coherent queries for estimating channel unitarity and Pauli coefficients~\cite{chen2023unitarity,chen2022pauli,chen2024pauli,chen2025pauli}, and adaptive strategies for quantum tomography~\cite{chen2023adaptivity,Chen2024AdaptivityShadow}.  Conjugate states~\cite{king2024conjugate}, pure dilation states~\cite{Liu2025ExponentialPurification}, and inverse unitaries~\cite{tang2025amplitude} are also shown to be crucial resources in a wide range of tasks.

Measurement plays a distinctive role as the final step of any quantum information processing task, serving as the interface between quantum systems and classical data. 
In practical settings, high-quality measurement is essential for quantum randomness generation~\cite{senno2023randomness,hao2023intrinsic}, quantum error correction~\cite{deMartiiOlius2024decoding}, and quantum metrology~\cite{vittorio2004metrology}.
Despite this central importance, existing studies on measurement learning mainly focus on tomography-based protocol design, commonly referred to as “detector tomography”~\cite{luis1999measurement,fiuraifmmode2001measurement,dariano2004measurement,Lundeen2009QDT,DetectorTomography2}, and various forms of hypothesis testing~\cite{BISIO2011measurements,sedlak2014measurements,cheng2015learnability,puchala2018measurements}. 
While recent studies have successfully leveraged entangled operations to achieve polynomial speedups in detector tomography~\cite{zambrano2025fast,mele2025optimal}, the specific quantum resources required to unlock \emph{exponential speedups} in measurement learning remain largely unexplored.

\begin{figure}
\centering
\includegraphics[width=1\linewidth]{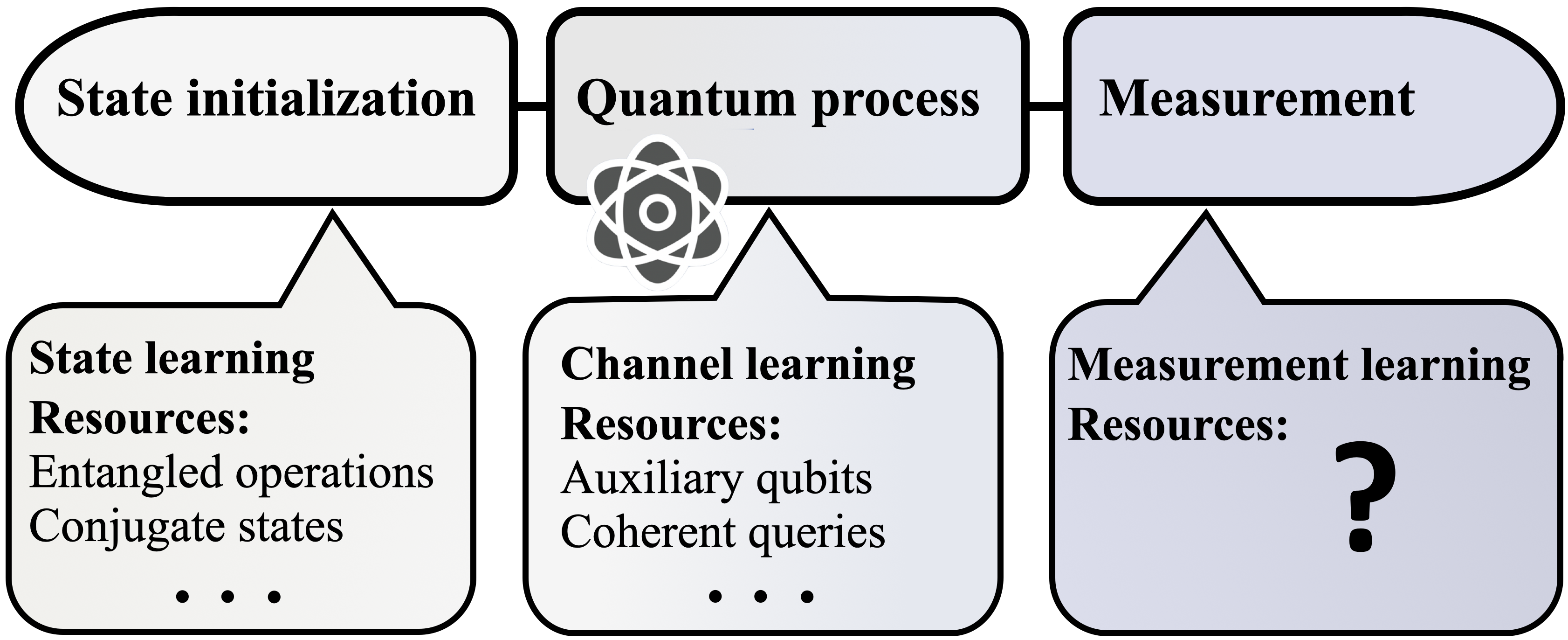}
\caption{A quantum information processing task consists of state initialization, quantum evolution, and measurement. 
Although various quantum resources have been identified that yield exponential speedups for state and channel learning, the resource structure of measurement learning remains largely unexplored.  }
\label{fig:overview}
\end{figure}

In this work, we identify a measurement property, sharpness, whose estimation query complexity exhibits an exponential separation depending on access to the post-measurement quantum state. Interestingly, this advantage cannot be replicated by conventional resources that are powerful for state and channel learning, such as entangled operations, auxiliary qubits, or adaptive strategies. 
Specifically, we prove that, with merely having access to measurement outcomes, at least $\Omega(\sqrt{d})$ queries, where $d$ denotes the system dimension, are necessary to distinguish a genuine projective measurement in a random basis from a classical random number generator. 
This hardness result holds for arbitrary quantum protocols and can be viewed as a quantum generalization of the birthday paradox. 
In contrast, when the post-measurement quantum states are also accessible, the same task can be solved with constant query complexity, independent of the system size, establishing a strict separation between the two settings. 
The protocol simply performs two identical successive measurements on the maximally mixed state, without entanglement, auxiliary systems, or adaptivity.  Taken together, our results identify a measurement learning task in which conventional resources for state and channel learning are ineffective, while post-measurement states constitute the decisive resource enabling an exponential reduction in query complexity.

\section{Measurement learning}
In practice, quantum measurements can be broadly classified into two types: \emph{destructive measurements}, which output only classical measurement outcomes, and \emph{non-destructive measurements} or quantum instruments~\cite{Davies1970instruments}, which additionally provide access to the post-measurement quantum state. 
A quantum measurement can be described with a set of measurement or Kraus operators $\{K_i\}_i$ satisfying $\sum_iK_i^\dagger K_i=\mathbb{I}$. 
When performing the measurement on a quantum state $\rho$, the probability for outputting the result $i$ is $p_i= \Tr(\rho K_i^\dagger K_i)$
and the post-measurement state is 
\begin{equation}\label{eq:post_measurement_state}
\rho_i=\frac{K_i\rho K_i^\dagger}{\Tr(K_i\rho K_i^\dagger)}.
\end{equation}
A \emph{positive operator-valued measure} (POVM) 
is a set of positive semi-definite operators $\{M_i\}_i$ satisfying $\sum_iM_i=\mathbb{I}$.
The probability of getting the result $i$ is then $\Tr(\rho M_i)$, without specifying the post-measurement state.
Given a set of measurement operators $\{K_i\}_i$, the corresponding POVM is $\{M_i=K_i^\dagger K_i\}_i$.
Given a set of POVM operators, there is no unique set of measurement operators that corresponds to it.

In this work, we focus on the measurement learning task, i.e., extracting properties of measurement operators $\{K_i\}_i$ or POVM operators $\{M_i\}_i$, in a precise sense.
In state learning tasks, we assume controllable quantum evolution and measurements.
In process learning tasks, we assume reliable state initialization and measurements, or using techniques to decouple the properties of states and measurements with the target quantum process~\cite{knill2008rb}.
Following this logic, we will assume the reliable quantum evolution and state initialization when analyzing the measurement learning tasks.

For quantum measurements, we say a measurement is ``sharp’’ if it is close to a projective measurement for which repeated measurements will only yield the same result~\cite{Buscemi2024completeoperational,saberian2024sharp}. This is what von Neumann has been referring to as a measurement of the first kind.
Conversely, a measurement is 
called ``fuzzy’’ when there is more uncertainty in the outcomes of consecutive measurements. 
Hence, we can define a quantity that faithfully characterizes how close a given POVM is from projective measurement as
\begin{equation}
\mathcal{S}(\{M_i\}_i)\coloneqq\frac{1}{d}\sum_i\Tr(M_i^2).
\end{equation}
In the remainder of this work, we will simply refer to this 
as the \emph{sharpness} of 
the measurement. 
It can be easily proved that $\mathcal{S}(\{M_i\}_i)$ reaches its maximum, $1$, if and only if the 
measurement is quantum 
projective measurements 
with $M_iM_j=M_i\delta_{i,j}$.

\section{Exponential lower bound with only classical outputs}

\begin{figure}
\centering
\includegraphics[width=1\linewidth]{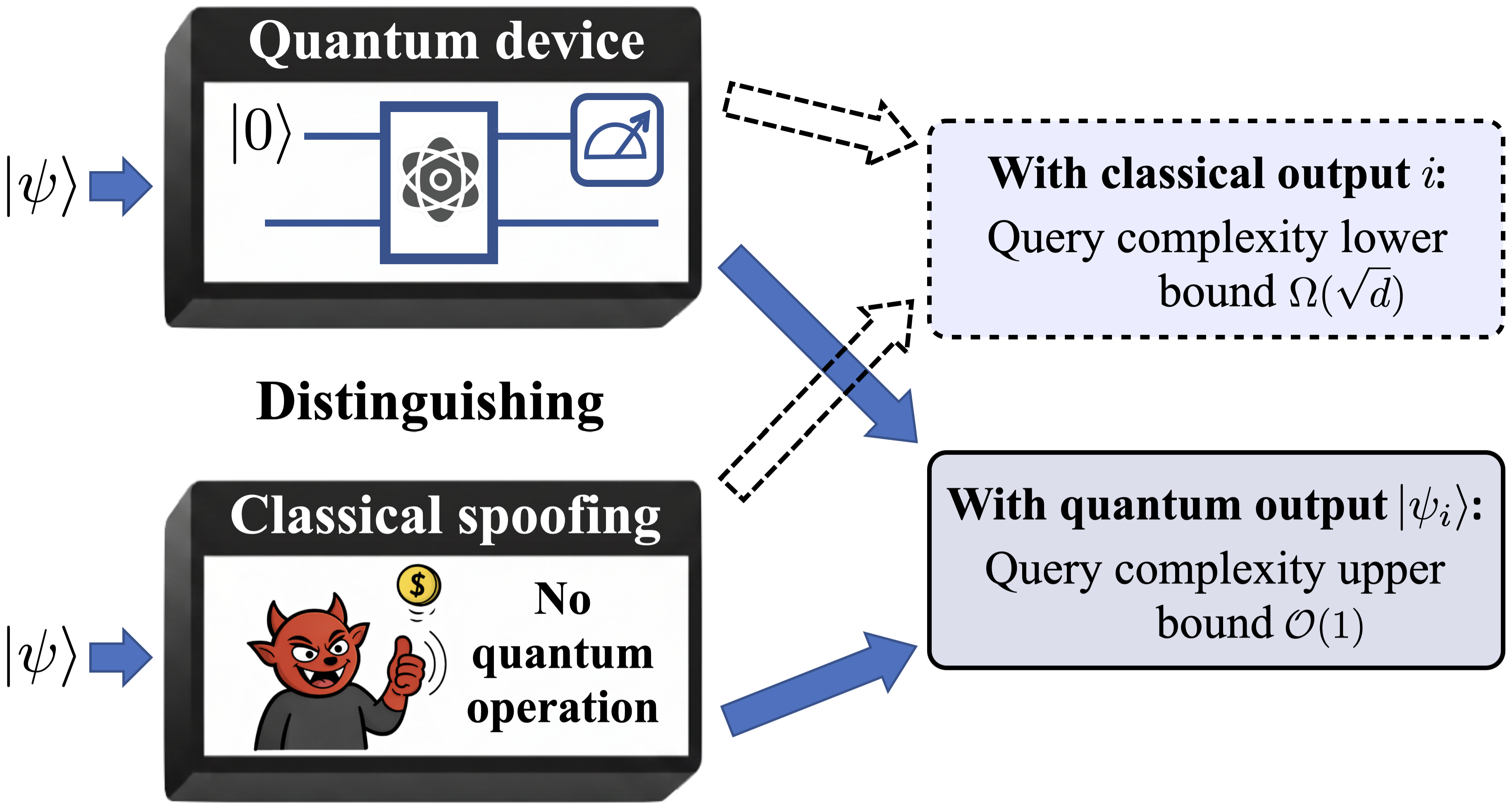}
\caption{The black box distinguishing task. }
\label{fig:bb}
\end{figure}

To derive the query complexity lower bound for estimating sharpness, we study a black-box distinguishing problem, as shown in Fig.~\ref{fig:bb}.
Here, multiple identical and independent copies of an unknown device are given and the tester can choose the input state and do further operations using the output of devices. 
The device may either faithfully perform a quantum projective measurement in an unknown and randomly chosen basis, or produce outputs sampled from a purely classical probability distribution without executing any quantum operation, precisely put as a promise problem as follows.

\begin{definition}[Black-box distinguishing task]\label{def:certification}
Given query access to a measurement device, the task is to distinguish between the following two scenarios, each occurring with equal prior probability:
\begin{enumerate}
\item[(i)] The device outputs $i \in \{0,\dots,d-1\}$ uniformly at random and leaves the input state unchanged.
\item[(ii)] The device performs a projective measurement with operators 
$\{\Pi_i = U \ketbra{i}{i} U^\dagger\}_{i=0}^{d-1}$ 
on the input state, where $U$ is an unknown $d$-dimensional Haar random unitary, outputs the measurement result $i$, and collapses the input state to the corresponding post-measurement state.
\end{enumerate}
\end{definition}

Notice that there exists a simple strategy to distinguish between these two kinds of black boxes using the birthday paradox, as shown in Fig.~\ref{fig:comb}(a).
In the uniform distribution case (i), when one samples for $\mathcal{O}(\sqrt{d})$ identical and independent times, with a constant probability, one will get a pair of identical outcomes.
If we input a fixed pure state, like 
that defined by a computational basis vector $\ket{0}$, into the black box of case (ii), the output distribution is given by the \emph{Porter–Thomas distribution}.
When sampling for $\mathcal{O}(\sqrt{d})$ times, the probability for getting two identical outcomes is almost twice as the probability of case (i), as proven in Appendix~\ref{app:upper_bound}.
Therefore, with sample complexity $\mathcal{O}(\sqrt{d})$, one can successfully distinguish these two different kinds of black 
boxes with a constant success probability even without any prior knowledge of the random unitary $U$.

\begin{figure}[htbp]
\centering
\includegraphics[width=1\linewidth]{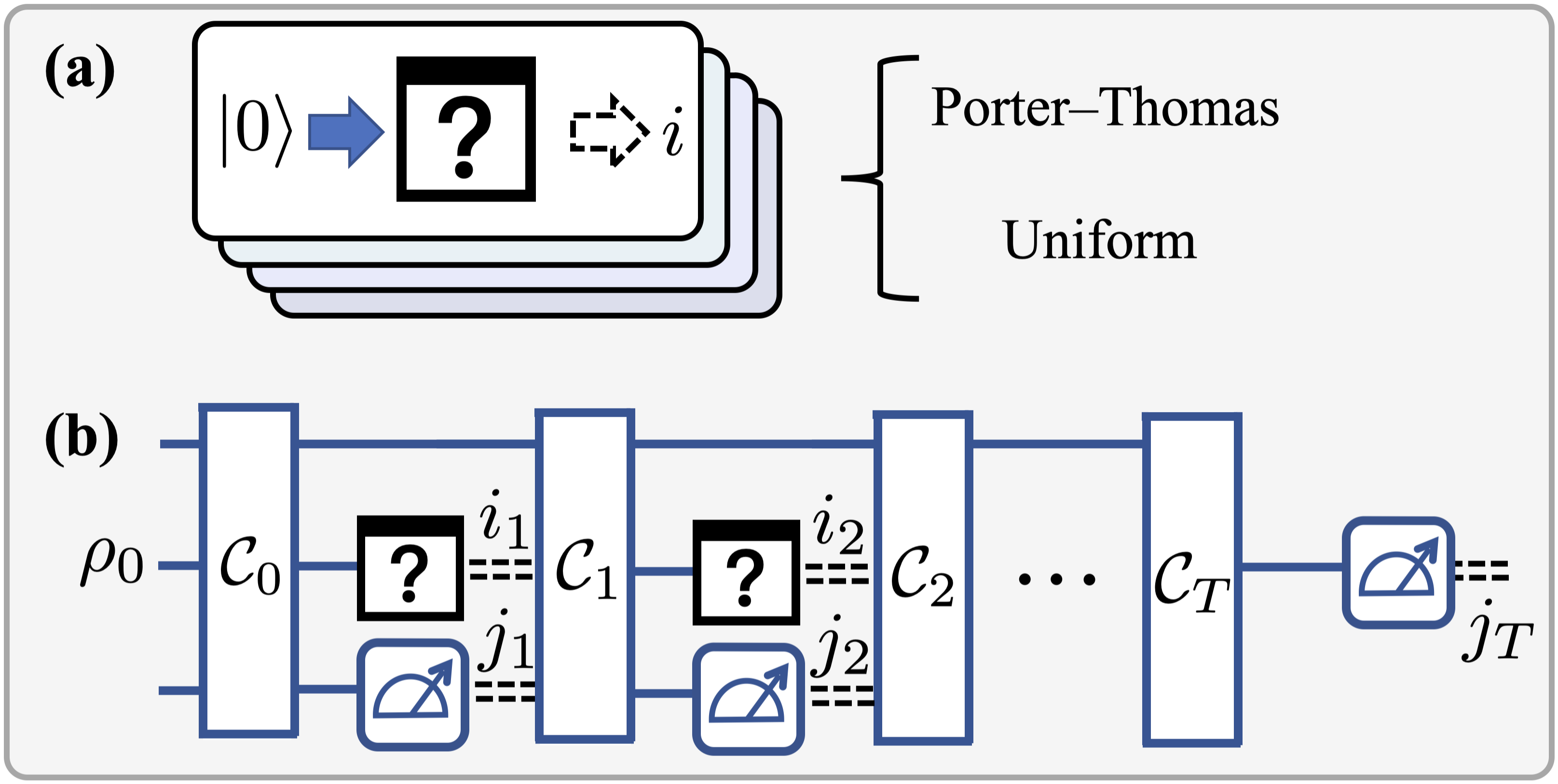}
\caption{(a) A distinguishing protocol by inputting a fixed state and comparing the output distribution. 
(b) The most general distinguishing protocol utilizing adaptivity, 
auxiliary qubits, entangled operation, and controllable measurements.
All the measurement outcomes, including the black box measurements and the controllable measurements, represented by blue boxes, will be utilized for the distinguishing task. 
}
\label{fig:comb}
\end{figure}

As mentioned before, quantum resources can speedup quantum learning tasks, such as entangled operations, coherent queries, and adaptivity.
Note that the simple protocol shown in Fig.~\ref{fig:comb}(a) does not utilize any one of these resources.
Therefore, it is natural to ask whether one can achieve a better query complexity performance with these resources.
We thus consider the most general certification protocol which allows any physical operations, including some controllable measurements in addition to the black box measurements, as shown in Fig.~\ref{fig:comb}(b).
Surprisingly, we theoretically prove that, even with the most general physical operations, the query complexity cannot break the lower 
bound of $\Omega(\sqrt{d})$. 

\begin{theorem}[Quantum birthday paradox]\label{thm:birthday}
For any quantum protocol, the query complexity for solving the task introduced in Definition~\ref{def:certification} with only classical output with constant success probability is $\Omega(\sqrt{d})$.
\end{theorem}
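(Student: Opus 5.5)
The plan is to model the most general protocol of Fig.~\ref{fig:comb}(b) with $T$ queries to a destructive device. In case (ii) each query is the channel $\mathcal{M}_U(\rho)=\sum_i \ketbra{i}{i}_{\rm cl}\otimes \Tr_{S}(\Pi_i\rho)$, which maps the queried system $S$ to a classical register and leaves the auxiliary system untouched. In case (i) it is $\mathcal{D}(\rho)=\frac{1}{d}\sum_i\ketbra{i}{i}_{\rm cl}\otimes\Tr_S(\rho)$. Between queries the tester may apply arbitrary adaptive channels on the classical registers and a large auxiliary system. Because the input system is discarded after each query, I can purify the protocol so that its final state in case (ii) is $\rho_U=\mathcal{A}_T\circ(\mathcal{M}_U\otimes\mathrm{id})\circ\cdots\circ\mathcal{A}_1(\rho_0)$. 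The success probability is then at most $\frac12+\frac14\|\mathbb{E}_U\rho_U-\rho_{\rm cl}\|_1$, so it suffices to show this trace distance is $O(T^2/d)$.

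The key structural step is to write the Haar-averaged sequential channel explicitly. The destructive measurement applied to $S$ with outcome $i$ is the map $X\mapsto \bra{i}U^\dagger X U\ket{i}$. After averaging over $U$, the joint behavior of $T$ queries with outcome string $\vec i=(i_1,\dots,i_T)$ becomes the Weingarten integral
\begin{equation}
\mathbb{E}_U\, U^{\otimes T}\ketbra{i_1\cdots i_T}{i_1\cdots i_T}U^{\dagger\otimes T}.
\end{equation}
The tester's operations sit between these applications, so I treat the whole protocol as a quantum comb (tester) contracted with the averaged $T$-fold process. If all $i_t$ are distinct, the averaged operator is close to $\mathbb{I}/d^T$ on the ordered slots. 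Concretely, it equals $\sum_{\sigma,\tau}\mathrm{Wg}(\sigma\tau^{-1},d)P_\sigma\,\langle\ldots\rangle$, and for distinct indices only the identity pairing survives at leading order, with corrections $O(T^2/d)$ coming from transposition terms. The classical device corresponds exactly to the operator $\frac{1}{d^T}\mathbb{I}^{\otimes T}$ for every $\vec i$, so it yields the comb of a maximally depolarizing, uniformly random output.

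The comparison then splits into collision and non-collision strings. The probability that any two outputs coincide is $O(T^2/d)$ in case (i), and I will show the same holds in case (ii) by the birthday-type bound. The argument: each query outcome is distributed as $\bra{i}U^\dagger\sigma U\ket{i}$ for some state $\sigma$ of $S$, and the probability that it equals a previously seen value is at most $\sum_{j\in\text{prev}}\mathbb{E}\bra{j}U^\dagger\sigma U\ket{j}$. I must control this adaptively, since $\sigma$ may depend on earlier outcomes and correlate with $U$. On non-collision strings I bound the difference of the two combs in diamond/comb norm by $O(T^2/d)$ using the Weingarten asymptotics $\mathrm{Wg}(\pi,d)=d^{-T}(\delta_{\pi,e}+O(T^2/d))$. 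Summing the pieces gives a total variation bound of $O(T^2/d)$, which forces $T=\Omega(\sqrt d)$.

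The main obstacle is the adaptive, entangled part. The tester's inputs can depend on earlier outcomes and can be entangled across queries through the auxiliary system, so the naive step-by-step argument fails. To handle it, I would first try a hybrid argument that replaces the measurement with the random-outcome device one query at a time. The goal for step $t$ is to show that, conditioned on the transcript so far, the next outcome's distribution under the Haar posterior is $O(t/d)$-close in expectation to uniform. The posterior over $U$ given $t-1$ distinct outcomes with arbitrary inputs stays close to Haar on the relevant subspace. If that fails, the fallback is to bound the Choi operators of the two combs directly. I would use positivity to reduce the comb norm to $\|J_{\rm quantum}-J_{\rm classical}\|$ weighted by a tester normalization, and then apply the Weingarten expansion.
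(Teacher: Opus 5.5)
Your framework is the paper's. Each query with outcome $x_t$ (your $i_t$) contributes $U|x_t\rangle\langle x_t|U^\dagger$, and the whole adaptive protocol becomes a positive tester operator contracted with $\bigotimes_t U|x_t\rangle\langle x_t|U^\dagger$. The paper builds this operator $A^{\mathbf{x},z}$ by swapping the queried register into a fresh slot and re-preparing $|x_t\rangle$ from an unnormalized Bell pair. One then expands $\mathbb{E}_U U^{\otimes T}|\mathbf{x}\rangle\langle\mathbf{x}|U^{\dagger\otimes T}$ with Weingarten calculus.

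The gap is the step you yourself call the main obstacle: you never bound the case-(ii) collision probability against adaptive, entangled testers. The hybrid argument you propose for it rests on a premise that does not give the conclusion. Closeness to Haar of the posterior over $U$ given the classical transcript does not control the next outcome, because the tester's quantum memory can be correlated with $U$ even when that posterior is exactly Haar. For example, sending half of a maximally entangled state gives a uniformly random $x_1$ independent of $U$, yet leaves $\overline{U}|x_1\rangle$ in memory. Feeding it back yields $x_2=x_1$ with probability $\mathbb{E}_U|\sum_k U_{kx_1}^2|^2=2/(d+1)$ instead of $1/d$. A sequential argument would therefore have to track the joint state of $U$ and the memory, which you do not supply. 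Your fallback (``bound the Choi operators directly'') also does not say how collision strings are handled.

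The missing observation is that the tester picture already absorbs adaptivity and entanglement. For every fixed string $\mathbf{x}$, the operators $A^{\mathbf{x},z}$ are positive with $\sum_z\mathrm{Tr}\,A^{\mathbf{x},z}=1$; to see this, run the protocol against the classical device, whose outputs are uniform whatever is fed in. This per-string condition, not $\mathrm{Tr}\,A=1$, is the relevant normalization. Write $P_{\mathrm{c}}$ and $P_U$ for the outcome distributions of the classical device and of the Haar-averaged measurement. Then, string by string and with no sequential control,
$\sum_z|P_{\mathrm{c}}(\mathbf{x},z)-P_U(\mathbf{x},z)|\le\|\mathbb{E}_U U^{\otimes T}|\mathbf{x}\rangle\langle\mathbf{x}|U^{\dagger\otimes T}-\mathbb{I}/d^T\|_\infty$.

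The paper handles all strings at once. After expanding the twirl over permutations, it uses $|\mathrm{Tr}(A^{\mathbf{x},z}\pi^{-1})|\le\mathrm{Tr}\,A^{\mathbf{x},z}$ and $\sum_{\mathbf{x}}\langle\mathbf{x}|\pi|\mathbf{x}\rangle=d^{c(\pi)}$ to bound the non-identity terms by $\sum_{\pi\neq e}d^{c(\pi)-T}=\prod_{j<T}(1+j/d)-1=O(T^2/d)$; this is where collisions are accounted for.

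Within your split it is even simpler: $P_U(\mathrm{coll})=1-P_U(\mathrm{dist})\le P_{\mathrm{c}}(\mathrm{coll})+\sum_{\mathrm{dist}}|P_U-P_{\mathrm{c}}|$. So the case-(ii) birthday bound comes for free from the non-collision estimate.

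For that estimate, the entrywise asymptotic $\mathrm{Wg}(\pi,d)=d^{-T}(\delta_{\pi,e}+O(T^2/d))$ cannot simply be summed over $T!$ permutations. Use instead that for distinct $\mathbf{x}$ the averaged operator is $\sum_\sigma\mathrm{Wg}(\sigma,d)P_\sigma$. This acts on the $\lambda$-isotypic component as $1/\prod_{(a,b)\in\lambda}(d+b-a)$, where $(a,b)$ runs over boxes in row $a$, column $b$. Its operator-norm distance from $\mathbb{I}/d^T$ is therefore $\frac{1}{d(d-1)\cdots(d-T+1)}-\frac{1}{d^T}$. Summing over the $d(d-1)\cdots(d-T+1)$ distinct strings gives at most $T(T-1)/(2d)$.
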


A direct corollary of this theorem is that, with only classical output, it is exponentially hard (in terms of query complexity) to align the outcome of random measurements.
Specifically, even with arbitrary entangled operations and adaptive strategies or ``quantum combs'' \cite{PhysRevX.15.021047},
$\Omega(\sqrt{d})$ query complexity is needed to see two identical outcomes.
We thus regard this theorem as the quantum generalization of birthday paradox.

Note that for case (i), the POVM operators are all ${\mathbb{I}}/{d}$ and thus the sharpness is simply given by ${1}/{d}$.
While for case (ii), the sharpness reaches its maximum value of $1$ and the gap between cases (i) and (ii) is a constant.
Therefore, if one can accurately estimate the measurement sharpness, one can also solve the distinguishing task.
We can thus draw the following conclusion.

\begin{corollary}[Query complexity bound]
If there exists a protocol that estimates the sharpness of any measurement to constant additive error with only classical output, the protocol must query the measurement at least $\Omega(\sqrt{d})$ times.
\end{corollary}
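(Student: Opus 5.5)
The plan is a direct reduction to Theorem~\ref{thm:birthday}: any sufficiently accurate sharpness estimator with only classical output can be turned into a distinguisher for the task of Definition~\ref{def:certification}, using the same number of queries. The lower bound then transfers immediately.

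First, I would check that the two black boxes of Definition~\ref{def:certification} are legitimate instances of measurements whose sharpness the hypothetical protocol must handle. Under classical-output access only the POVM is visible, since the post-measurement state is discarded.
\begin{itemize}
\item In case (i) the POVM is $\{M_i=\mathbb{I}/d\}_{i=0}^{d-1}$, so $\mathcal{S}=\frac{1}{d}\sum_i \Tr(\mathbb{I}/d^2)=1/d$. The fact that this device ``leaves the input state unchanged'' is invisible here, because no post-measurement state is returned.
\item In case (ii), for every fixed $U$, the POVM is the rank-one projective measurement $\{U\ketbra{i}{i}U^\dagger\}_i$, so $\mathcal{S}=1$.
\end{itemize}
Hence the gap is $1-1/d\ge 1/2$ for all $d\ge 2$.

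Next, I would fix a constant additive error $\epsilon<1/4$. If a different constant is given, the argument is unchanged as long as $\epsilon<(1-1/d)/2$, which holds for all large $d$. Let the protocol output an estimate $\hat{\mathcal{S}}$ that is $\epsilon$-accurate with probability at least $2/3$ (or any constant above $1/2$) on every measurement. The distinguisher outputs ``(ii)'' if $\hat{\mathcal{S}}>1/2$ and ``(i)'' otherwise.
\begin{itemize}
\item In case (i), with probability at least $2/3$ we have $\hat{\mathcal{S}}\le 1/d+\epsilon<1/2$, so the output is correct.
\item In case (ii), for each fixed $U$ we have $\hat{\mathcal{S}}\ge 1-\epsilon>1/2$ with probability at least $2/3$.
\item Averaging over the Haar-random $U$ therefore preserves the success probability of at least $2/3$.
\end{itemize}
With equal priors, the overall success probability is at least $2/3$, a constant.

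This distinguisher uses exactly the queries of the estimator, plus classical post-processing. Whatever entangled inputs, auxiliary systems or adaptivity the estimator uses, the distinguisher is itself a quantum protocol with only classical output, which is exactly the setting of Theorem~\ref{thm:birthday}. Therefore the estimator needs $\Omega(\sqrt{d})$ queries.

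The only delicate point is ensuring that the estimator's guarantee is worst-case over measurements, so that it holds uniformly in $U$ and survives the Haar average. A second, smaller point is whether a constant success probability of the estimator suffices. Both follow from the phrasing ``any measurement'' and the constant-success-probability clause of Theorem~\ref{thm:birthday}.
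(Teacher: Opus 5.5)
Your reduction is exactly the paper's argument: the paper notes that the sharpness is $1/d$ in case (i) and $1$ in case (ii), so a constant-accuracy sharpness estimator with classical output solves the task of Definition~\ref{def:certification} and inherits the $\Omega(\sqrt{d})$ bound of Theorem~\ref{thm:birthday}; your points about a worst-case guarantee holding uniformly in $U$ and about thresholding only make this explicit. One cosmetic fix: with threshold $1/2$ and $\epsilon<1/4$, case (i) needs $d\ge 4$, and for a general $\epsilon<(1-1/d)/2$ you should threshold at the midpoint $(1+1/d)/2$ rather than at $1/2$, which does not affect the asymptotic conclusion.
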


\section{Constant query complexity with having access to post-measurement states}

If instead of resorting to destructive measurements, we have access to non-destructive measurement apparata that allow for an access to the post-measurement state, then there exists a simple protocol to estimate the sharpness in our task without any entangled operations and adaptive strategies, as shown in Fig.~\ref{fig:twice}(a).
The protocol then amounts to simply repeating the following steps:
\begin{enumerate}
\item Start by preparing the maximally mixed state $\rho={\mathbb{I}}/{d}$.
\item Measure the state, record the result $i$ and keep the post-measurement state $\rho_i={K_iK_i^\dagger}/{{\Tr}(K_iK_i^\dagger)}$.
\item Perform the same measurement on the state $\rho_i$ to get the second measurement result $j$ and calculate the probability that $i=j$.
\end{enumerate}
According to the definition of quantum measurement, the probability we calculate is actually the unbiased estimator of the sharpness
\begin{equation}
p(i=j)=\sum_ip(i)p(j=i|i)=\mathcal{S}(\{M_i\}_i)
\end{equation}
whenever we have $[K_i,K_i^\dagger]=0$, which holds for the task defined in Definition~\ref{def:certification}.
As the value of $p(i=j)$ is contained in the interval
$[0,1]$, the variance for single-shot estimation is upper bounded by some constant.
Therefore, a constant repeating times is enough to estimate the sharpness with constant additive error.
We thus arrive at the following conclusion.

\begin{theorem}[Estimating sharpness]
There exists a protocol that estimates the sharpness with constant additive error and thus solves the task in Definition~\ref{def:certification} with constant success probability using $\mathcal{O}(1)$ queries, provided that the black box outputs both classical outcomes and post-measurement quantum states defined in \eqref{eq:post_measurement_state}, and the measurement operators are normal operators.
\end{theorem}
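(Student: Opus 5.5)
The plan is to analyse the measuring-twice protocol directly. I will compute the exact probability that the two outcomes agree, then show that repeating a constant number of times gives a Hoeffding-type concentration. Finally, I will threshold the estimate to separate cases (i) and (ii) of Definition~\ref{def:certification}.

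\textbf{Step 1: exact computation.} Take input $\rho=\mathbb{I}/d$. The first outcome $i$ occurs with probability $p(i)=\Tr(K_i^\dagger K_i)/d=\Tr(M_i)/d$, and the post-measurement state is $\rho_i=K_iK_i^\dagger/\Tr(K_iK_i^\dagger)$. Using cyclicity of the trace, $\Tr(K_iK_i^\dagger)=\Tr(M_i)$. The second measurement then gives outcome $i$ with probability $\Tr(M_i\rho_i)=\Tr(K_i^\dagger K_i K_i K_i^\dagger)/\Tr(M_i)$. Summing,
\begin{equation}
p(i=j)=\frac{1}{d}\sum_i \Tr\!\left(K_i^\dagger K_i K_i K_i^\dagger\right)=\frac{1}{d}\sum_i\Tr\!\left(K_i K_i^\dagger K_i^\dagger K_i\right).
\end{equation}
For normal $K_i$ we have $K_iK_i^\dagger=K_i^\dagger K_i=M_i$, so this equals $\frac1d\sum_i\Tr(M_i^2)=\mathcal{S}(\{M_i\}_i)$. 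Terms with $\Tr(M_i)=0$ contribute zero and can be dropped throughout.

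For the two black boxes this can be checked by hand. In case (ii) the Kraus operators are $\Pi_i$, which are normal, and a repeated projective measurement always agrees, so $p(i=j)=1$. In case (i) the device leaves the state unchanged, with effective Kraus operators $\mathbb{I}/\sqrt d$, also normal, and the outcomes are independent and uniform, so $p(i=j)=1/d$.

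\textbf{Step 2: concentration.} Each round uses two queries and yields an indicator $X_r=\mathbb{1}[i_r=j_r]\in\{0,1\}$. Because fresh copies of the device are used and the state $\mathbb{I}/d$ is re-prepared each round, the $X_r$ are i.i.d. with mean $\mathcal{S}$. This holds conditionally on $U$ in case (ii), which is enough since the mean is $1$ for every $U$. Hoeffding's inequality gives $\Pr[|\hat{\mathcal S}-\mathcal S|\ge\epsilon]\le 2e^{-2N\epsilon^2}$. Hence $N=\mathcal{O}(\log(1/\delta)/\epsilon^2)$ rounds, i.e.\ $\mathcal{O}(1)$ queries, give constant additive error $\epsilon$ with constant confidence $1-\delta$.

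\textbf{Step 3: distinguishing.} For the task, set $\epsilon=1/4$ and output ``(ii)'' iff $\hat{\mathcal S}>1/2$. Since the gap $1-1/d\ge1/2$ for $d\ge2$, this succeeds with constant probability. In fact case (ii) never errs, and case (i) errs with probability at most $2^{-N}$ when all $N$ rounds are required to agree, independent of $d$.

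I expect the main subtlety to lie in Step 1: the identity $p(i=j)=\mathcal S$ requires normality. In general one only gets $\frac1d\sum_i\Tr(K_iK_i^\dagger K_i^\dagger K_i)$, which need not equal $\frac1d\sum_i\Tr(M_i^2)$; this is why the statement assumes normal measurement operators. The rest is routine.
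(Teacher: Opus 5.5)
Your proposal is correct and follows the paper's argument: run the measuring-twice protocol on $\mathbb{I}/d$, use the identity $p(i=j)=\frac{1}{d}\sum_i\Tr(K_i^\dagger K_i K_i K_i^\dagger)=\mathcal{S}(\{M_i\}_i)$ for normal Kraus operators, and use boundedness of the agreement indicator to get $\mathcal{O}(1)$ repetitions, which you make explicit via Hoeffding. One small point: for $d<4$ the threshold $1/2$ lies within $\epsilon=1/4$ of the case-(i) value $1/d$, so that particular justification does not apply there as written. Your all-rounds-agree rule (error $d^{-N}$ in case (i), none in case (ii)) or the midpoint threshold $(1+1/d)/2$ fixes this immediately.
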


\begin{figure}
\centering
\includegraphics[width=1.0\linewidth]{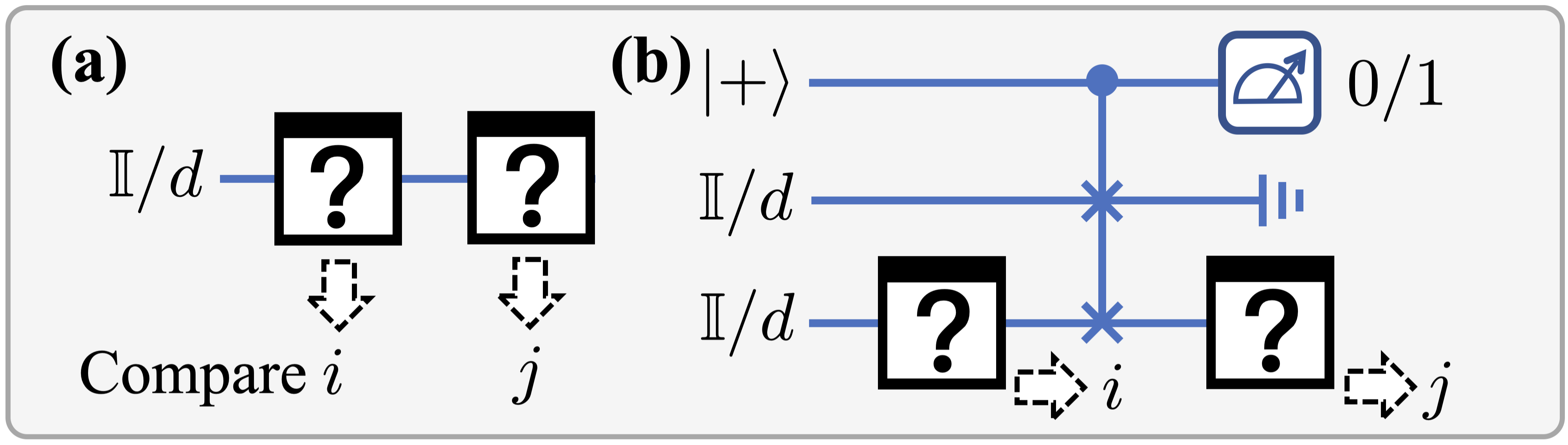}
\caption{(a) Protocol for estimating sharpness with the post-measurement state. (b) The adversarial robust protocol with one bit of quantum randomness. }
\label{fig:twice}
\end{figure}

In practical scenarios, performing two measurements on the same quantum state requires either querying the black box twice or accessing two identical copies of the device. In both situations, the simple protocol may be vulnerable to adversarial strategies, such as internal memory within a single device or classical communication between multiple devices. To enhance robustness against such behavior, we incorporate one bit of quantum randomness. Specifically, between the first and second queries, the tester measures a single qubit prepared in the state $\ket{+}$ in the computational basis and uses the random outcome to determine the input of the second query. If the outcome is $\ket{0}$, the post-measurement state obtained from the first query is sent into the black box again, reproducing the original protocol in Fig.~\ref{fig:twice}(a); if the outcome is $\ket{1}$, an independent maximally mixed state is used instead. Since the adversary does not know the measurement outcome of the control qubit, it cannot adapt its strategy to enforce either correlated or independent outputs, thereby rendering the protocol adversarially robust. This procedure is equivalent to the quantum circuit shown in Fig.~\ref{fig:twice}(b), where a controlled-SWAP gate conditionally exchanges the maximally mixed state with the post-measurement state from the first query. Consequently, for both the original ``measuring-twice'' protocol and its adversarially robust variant, the implementation overhead remains minimal, requiring only access to a maximally mixed state and at most one bit of quantum randomness.

\section{Quantumness certification}

While before we have been concerned with the abstract question of deriving query complexity bounds, we now turn to the question of how this can be used in quantumness certification, and how the 
results may offer useful insights for the design of certifiable quantum devices.
For certain classical computational problems, certification can rely on intrinsic mathematical structure. 
For instance, in Shor’s algorithm for integer factorization~\cite{shor1999}, the output can be efficiently verified classically, the problem at hand being contained in NP. However, this paradigm does not extend to intrinsically quantum tasks such as quantum many-body simulation, where reliable certification would in general require trusted quantum operations of comparable complexity. 
Moreover, for non-computational tasks such as quantum random number generation~\cite{herrero2017qrng}, purely classical devices may mimic quantum output statistics, rendering certification based solely on classical outcomes fundamentally difficult~\cite{Pironio2010randomness,Mal2016randomness}. 

In digital quantum computing platforms, measurements are typically implemented as a unitary evolution $U$ followed by a computational basis measurement. 
If measurements are restricted to a fixed and known basis, such as the computational basis, then even with access to post-measurement states, the distinguishing task remains exponentially hard by virtue of Theorem~\ref{thm:birthday}: Obviously, having access to the post-measurement state does not provide any additional information. However, if the device provider additionally supplies the inverse unitary $U^\dagger$ applied to the post-measurement state before releasing it, the client can implement the simple ``measuring twice'' protocol using only a maximally mixed state and a single bit of quantum randomness. 
In this way, quantumness can be certified with constant query complexity and minimal quantum capability on the verifier’s side. 
This demonstrates how the resource identified in this work can be operationally leveraged in practical certification settings.

\section{Discussion}

At this point, we discuss the broader implications of our technical results.
As stressed before, state preparation and measurement constitute the first and final stages of any quantum information processing task and exhibit a natural duality: For this reason, detector and state tomography can be seen as being dual to each other \cite{Lundeen2009QDT,chen2023adaptivity,zambrano2025fast,mele2025optimal}. Only on rare occasions, one can acquire knowledge about both at the same time \cite{Blind}. 
It is, therefore, tempting to expect that similar quantum resources would be useful for both state learning and measurement learning, for instance, entangled operations for purity estimation. However, our results reveal a striking contrast and asymmetry in those tasks: There exist measurement learning tasks for which entangled operations provide no advantage when only classical measurement outcomes are available. 
In contrast, access to post-measurement quantum states enables an exponential query complexity speedup of order $\sqrt{d}$ through repeated measurements, paralleling the role of entangled operations in state 
purity estimation~\cite{chen2021memory}.

To add yet a further perspective, 
a quantum measurement can also be viewed as a special type of quantum channel. Without access to 
post-measurement states, the 
corresponding measurement channel is 
\begin{equation}
\mathcal{M}_{\mathrm{c}}(\rho)=\sum_i \ketbra{i}{i}_{\mathrm{c}} \Tr(K_i \rho K_i^\dagger),
\end{equation}
where $\mathrm{c}$ denotes the classical register storing the outcome. 
With post-measurement states being available, the measurement channel becomes 
\begin{equation}
\mathcal{M}_{\mathrm{q}}(\rho)=\sum_i \ketbra{i}{i}_{\mathrm{c}} \otimes K_i \rho K_i^\dagger.
\end{equation}
Our results demonstrate that access to $\mathcal{M}_{\mathrm{q}}$ yields an exponential reduction in query complexity compared with access to $\mathcal{M}_{\mathrm{c}}$ for estimating sharpness. 
Although sharpness is formally reminiscent of channel unitarity, defined as the purity of the Choi state, it does not coincide with the unitarity of either $\mathcal{M}_{\mathrm{q}}$ or $\mathcal{M}_{\mathrm{c}}$. 
Therefore, the speedup in sharpness estimation represents a qualitatively new form of learning advantage, distinct from previously identified advantages in unitarity learning~\cite{chen2023unitarity}.

Indeed, our work connects to several prior lines of research. A direct corollary of Theorem~\ref{thm:birthday} establishes the hardness of a distributional hypothesis testing problem, namely distinguishing between the uniform distribution and the Porter--Thomas distribution generated by an unknown random unitary. Similar questions arise in the certification of \emph{quantum random sampling schemes}~\cite{black-box-verification,Boixo}, where the output distribution is often well approximated by a Porter--Thomas distribution~\cite{hangleiter2023sampling}, including IQP circuits~\cite{shepherd_temporally_2009}, boson sampling~\cite{BosonSampling}, random circuit sampling~\cite{Boixo}, and certain quantum simulation schemes~\cite{NewSupremacy}. 
It is known that when the random unitary is known in advance, the Porter-Thomas distribution can be distinguished from the uniform distribution with constant query complexity~\cite{aaronson2013bosonsampling}, whereas without such prior knowledge the task becomes exponentially hard~\cite{SampleComplexity}. 
By contrast, we show that when the two distributions originate from physically distinct mechanisms, one produced by a genuine quantum measurement and the other by a purely classical process, access to post-measurement quantum states enables an exponential speedup in query complexity, even without knowledge of the random unitary. 
More generally, related distribution distinguishing problems have been studied under both classical and quantum query models~\cite{gilyen2019distributional,wang2024distribution}, where at most polynomial quantum speedups are known, highlighting that physically meaningful quantum access can fundamentally alter the complexity landscape.

As discussed in the previous section, outputting the post-measurement quantum state in a digital architecture requires applying an additional inverse unitary $U^\dagger$ after the computational-basis measurement. 
From this perspective, the 
exponential speedup can be traced back to the availability of this inverse unitary as oracle access. 
Previous work has shown that access to such inverse unitaries can yield quadratic improvements in tasks such as amplitude amplification 
and amplitude estimation~\cite{tang2025amplitude}. 
In our setting, however, 
each unitary $U$ is 
immediately followed by 
a computational-basis measurement, which destroys coherent access and fundamentally limits the power of standard quantum query strategies. 
We conjecture that it is precisely this measurement-induced collapse that allows post-measurement state access to produce an exponential, rather than merely quadratic, advantage.

\begin{acknowledgements}
We acknowledge the insightful discussion with Yunchao Liu, Alexander Nietner, Yifan Tang and Weiyuan Gong and acknowledge Antonio Anna Mele for introducing Refs.~\cite{zambrano2025fast,mele2025optimal} to this context. 
Q.Y. acknowledges support from the National Natural Science Foundation of China (No. T24B2002).
Z.C. acknowledges support from EPSRC projects Robust and Reliable Quantum Computing (RoaRQ, EP/W032635/1) and the EPSRC Quantum
Technologies Career Acceleration Fellowship (UKRI1226).
E.J. acknowledges support by the BMFTR (PraktiQOM, QuSol, HYBRID++, PQ-CCA),
the Munich Quantum Valley, Berlin Quantum, the Quantum Flagship (Millenion, PasQuans2), the DFG (CRC 183, SPP 2514), the Clusters of Excellence (MATH+, ML4Q), and the European Research Council (DebuQC).

\emph{Note added.}
While finalizing this work, we noticed an excellent concurrent paper by Eid and Quintino~\cite{eid2026postmeasurementstates} that investigates closely related topics. 
The two works have been carried out independently. 
Key differences include the following. 
Eid and Quintino~\cite{eid2026postmeasurementstates} explore the advantages of post-measurement states in single-shot discrimination of two different measurement settings, focusing on the success probability. 
In contrast, (i) we study the task of estimating certain properties of measurements, (ii) we primarily focus on query complexity, demonstrating an exponential separation in query complexity enabled by post-measurement states, and (iii) we go beyond the single-shot protocol and prove that the advantage of post-measurement states holds for arbitrary discrimination protocols.

\end{acknowledgements}

\onecolumngrid
\appendix
\clearpage

\section{Distinguishing between uniform and Porter-Thomas distributions}\label{app:upper_bound}

This appendix provides details of the arguments of the main text. In this section, we would like to prove that $\mathcal{O}(\sqrt{d})$ query complexity is enough to distinguish between the uniform distribution and the distribution defined by 
\begin{equation}\label{eq:PT}
p_U(x)=\abs{\bra{x}U\ket{0}}^2,
\end{equation}
where $U$ is sampled according to the $d$-dimensional Haar measure. It is known that for a fixed $x$, the probability $p_U(x)$ approximately follows the Porter-Thomas distribution when $d$ is large. 
Note that the subsequent theorem has some resemblance in mindset with the results of Ref.\ \cite{SampleComplexity} in which the output distribution of a boson sampler is tested against the uniform distribution. The following 
theorem covers this 
binary hypothesis testing problem between two probability distributions over the same sample space.

\begin{theorem}[Sample complexity of 
discriminating the Porter-Thomas from the uniform distribution]
Let $\mathcal{H}_0$ denote the hypothesis that samples are drawn from the $d$-dimensional uniform distribution, and let $\mathcal{H}_1$ denote the hypothesis that samples are drawn from $p_U(x)$ for an unknown Haar random unitary $U$.
Then there exists a test based on $N = \mathcal{O}(\sqrt{d})$ independent samples whose success probability is at least $2/3$.
\end{theorem}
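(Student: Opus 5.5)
We use the collision statistic suggested in the main text. Draw $N$ samples $x_1,\dots,x_N$ and count colliding pairs, $C=\sum_{a<b}\mathbb{1}[x_a=x_b]$. We accept $\mathcal{H}_1$ if $C>\tau$ and $\mathcal{H}_0$ otherwise, with threshold $\tau=\tfrac{3}{2}\binom{N}{2}\tfrac{1}{d}$. The proof reduces to computing the first two moments of $C$ under each hypothesis and applying Chebyshev's inequality. Under $\mathcal{H}_1$, all moments are averaged jointly over the Haar unitary $U$ and the samples.

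\textbf{Step 1 (means).} Under $\mathcal{H}_0$, each pair collides with probability $1/d$, so $\mathbb{E}_0[C]=\binom{N}{2}/d$. Under $\mathcal{H}_1$, each pair collides with probability $\mathbb{E}_U\sum_x p_U(x)^2$. The vector $U\ket{0}$ is Haar random, so $\mathbb{E}|\bra{x}U\ket{0}|^4=2/(d(d+1))$. This gives $\mathbb{E}_1[C]=\binom{N}{2}\frac{2}{d+1}$, essentially twice the null mean. The gap is $\Delta\approx\binom{N}{2}/d\approx N^2/(2d)$.

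\textbf{Step 2 (variances).} Expand $C^2$ over ordered pairs of pairs. Pairs sharing no index contribute covariance only through the randomness of $U$. Pairs sharing one index contribute terms like $\sum_x p(x)^3$. Identical pairs contribute $\mathbb{E}[C]$.

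Under $\mathcal{H}_0$ the pairs are pairwise independent, so $\mathrm{Var}_0(C)\le \binom{N}{2}/d=O(N^2/d)$.

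Under $\mathcal{H}_1$, we condition on $U$ and use the law of total variance. The conditional variance is bounded by $\binom{N}{2}\sum_x p^2+N^3\sum_x p^3$. Using $\mathbb{E}|\bra{x}U\ket{0}|^6=6/(d(d+1)(d+2))$, its expectation is $O(N^2/d+N^3/d^2)$. The variance of the conditional mean is $\binom{N}{2}^2\mathrm{Var}_U\big(\sum_x p_U(x)^2\big)$. We need this to be $o(\Delta^2)$, i.e.\ $\mathrm{Var}_U\big(\sum_x p_U(x)^2\big)=o(1/d^2)$. This requires the eighth moments $\mathbb{E}[p(x)^4]$ and the cross moments $\mathbb{E}[p(x)^2p(y)^2]$ for $x\neq y$, both of which follow from Weingarten/Dirichlet formulas, since $p_U$ is uniform on the simplex. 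We expect $\mathrm{Var}_U(\sum_x p^2)=O(1/d^3)$.

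\textbf{Step 3 (Chebyshev).} Take $N=c\sqrt{d}$. Then $\Delta=\Theta(c^2)$, and both standard deviations are $O(c+c^2/\sqrt d+c^{3/2}d^{-1/4})$, which is $O(c)$. Chebyshev then bounds each error probability by $O(1/c^2)$, which is at most $1/3$ once $c$ is a sufficiently large constant.

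The main obstacle is the fluctuation over $U$ in Step 2, namely showing that the collision probability $\sum_x p_U(x)^2$ concentrates around $2/(d+1)$ with relative fluctuation $O(d^{-1/2})$. There are two ways to get this. The first is a direct computation via the Dirichlet representation $p(x)=g_x/\sum_y g_y$ with i.i.d.\ exponential $g_x$. The second is Lévy's lemma applied to the Lipschitz function $\sum_x|\bra{x}U\ket{0}|^4$ on the sphere, although the Lipschitz constant requires care. Everything else in the argument is routine bookkeeping.
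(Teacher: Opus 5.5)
Your proposal is correct and follows essentially the same route as the paper: the collision counter $C$, its mean and variance from the Haar moments $\mathbb{E}_U s_2=2/(d+1)$ and $\mathbb{E}_U s_3=6/((d+1)(d+2))$ (with $s_k=\sum_x p_U(x)^k$), and Chebyshev at $N=\Theta(\sqrt{d})$. You are actually more careful than the paper: its $\mathcal{H}_1$ variance bound covers only $\mathbb{E}_U\mathrm{Var}(C\mid U)$ and omits the term $\binom{N}{2}^2\mathrm{Var}_U(s_2)$ you flagged, while the Dirichlet computation you suggest gives exactly $\mathrm{Var}_U(s_2)=\frac{4(d-1)}{(d+1)^2(d+2)(d+3)}=O(d^{-3})$, so that term is $O(N^4/d^3)=O(1/d)$ and your argument closes.
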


\begin{proof}
    Given any probability distribution $p(x)$, suppose that we sample according to this distribution for $N$ times and obtain results $\{x_a\}_{a=1}^N$. We calculate a random 
    variable $C$, called the collision 
    counter, as
    \begin{equation}
        C = \sum_{a<b}\mathbf{1}[x_a=x_b].
    \end{equation}
    We now calculate the expectation and 
    variance of $C$ under the two hypotheses. 
    For convenience, denote
    \begin{equation}
        s_2=\sum_xp(x)^2 , \ \ s_3=\sum_xp(x)^3.
    \end{equation}
    It is easy to see that $\mathbb{E}[C] = \binom{N}{2} \mathrm{Pr}[x_1=x_2] = \binom{N}{2} s_2$. 
    \begin{equation}
        \mathrm{Var}[C] = \mathbb{E}[C^2]-\mathbb{E}[C]^2 = \sum_{a<b, c<d}(\mathrm{Pr}[x_a=x_b, x_c=x_d]-\mathrm{Pr}[x_a=x_b]\mathrm{Pr}[x_a=x_b]).
    \end{equation}
    The inner term gives the value $0$ when $\{a, b\}\cap \{c, d\}=\emptyset$, $s_3-s_2^2$ when $\left|\{a, b\}\cap \{c, d\}\right|=1$, and $s_2-s_2^2$ when $(a, b)=(c, d)$. Therefore, 
    \begin{equation}
        \mathrm{Var}[C] = \binom{N}{2}(s_2-s_2^2)+6\binom{N}{3}(s_3-s_2^2).
    \end{equation}
    For hypothesis $\mathcal{H}_0$, $s_2 = \frac{1}{d}, s_3=\frac{1}{d^2}$, so 
    \begin{equation}
        \mathbb{E}[C] = \binom{N}{2}\frac{1}{d}, \ \ \mathrm{Var}[C] = \binom{N}{2}\left(\frac{1}{d}-\frac{1}{d^2}\right)\leq \binom{N}{2}\frac{1}{d}.
    \end{equation}
    In the case of hypothesis $\mathcal{H}_1$, we additionally need to consider the randomness from $U$. According to the Haar integral, we find 
    \begin{equation}
        \mathbb{E}_U[s_2] = \frac{2}{d+1}, \mathbb{E}_U[s_3] = \frac{6}{(d+1)(d+2)}.
        \end{equation}Therefore,
    \begin{equation}
        \mathbb{E}[C] = \binom{N}{2}\frac{2}{d+1}, \ \ \mathrm{Var}[C] \leq \binom{N}{2}\frac{2}{d} + \binom{N}{3}\frac{36}{d^2}.
    \end{equation}
    Taking $N = 20\sqrt{d}$, the expectations under two hypotheses differ by at least $\binom{N}{2}\frac{2}{d+1}-\binom{N}{2}\frac{1}{d}\ge 100$, while the variances under two hypotheses are both at most 400. By Chebyshev's inequality, with probability at least $1-\frac{400}{50^2}>2/3$, we can estimate $\mathbb{E}[C]$ within additive error $50$, which is sufficient to distinguish the two hypotheses.
\end{proof}

\section{General query complexity lower bound with only classical output}
\newcommand{\memory}{\text{\textsf{MEM}}}
\newcommand{\inp}{\text{\textsf{IN}}}
\newcommand{\inputs}{\text{\textsf{INs}}}
\newcommand{\outp}{\text{\textsf{OUT}}}
\newcommand{\outputs}{\text{\textsf{OUTs}}}
\newcommand{\total}{\text{\textsf{TOTAL}}}
\renewcommand{\read}{\text{\textsf{READ}}}
\newcommand{\initial}{\text{int}}
\newcommand{\TV}{\text{TV}}
\newcommand{\bx}{\mathbf{x}}
\newcommand{\cM}{\mathcal{M}}
\newcommand{\cC}{\mathcal{C}}
\newcommand{\cV}{\mathcal{V}}
Here, we prove that any protocol requires at least $\sqrt{d}$ queries to distinguish between a truly quantum measurement $\cM_U=\{M_x = U\ketbra{x} U^\dagger\}_x$ and a classical uniform random number generator, which corresponds to the trivial POVM 
\begin{equation}
\cM_\mathrm{c}=\left\{M_x = \frac{\mathbb{I}}{d}\right\}_x.
\end{equation}
We consider the protocol at hand in the following form. It first prepares an $(m+n)$-qubit initial state $\rho_0$, where $m$ is the number of auxiliary qubits and can be arbitrarily large. 
In round $t$, it measures the last $n$ qubits of $\rho_{t-1}$ using the unknown measurement $\cM$ (which is either $\cM_U$ or $\cM_\mathrm{c}$), obtaining outcome $x_t \in [d]$, and replaces the measured qubits with the computational basis state vector $\ket{x_t}$. 
It then performs a unitary $V_t$ to obtain $\rho_t$ and proceeds to the next round. 
After $T$ rounds of such queries, it performs a final computational basis measurement on all qubits and obtains an outcome $z$. 
Based on $\bx=(x_1, \ldots, x_T)$ and $z$, the protocol decides whether $\cM=\cM_U$ or $\cM=\cM_\mathrm{c}$. 
Here, we depict the circuit with 3 queries. 

\begin{center}
\includegraphics{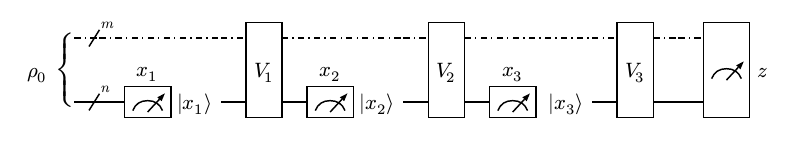}
\end{center}

We have made the following simplifications without loss of generality:
\begin{itemize}
\item The distinguisher does not perform mid-circuit measurements other than queries to the unknown measurement, as all such mid-circuit measurements can be postponed to the end with the price of more auxiliary qubits and entangled gates.

\item We assume that the outcome $x$ is fed back to the circuit as the computational state vector $\ket{x}$. 
In general, there could be a powerful classical computer that decides on the next input state. 
However, such classical computation can always be implemented within the subsequent quantum circuit.
\item We assume that the final measurement is a computational basis measurement. 
Since we allow arbitrary quantum operations before the final measurement as well as an arbitrary number of ancilla qubits, any general measurement can be implemented in this way.
\end{itemize}
Therefore, our model captures the most general distinguisher. 
We now prove our main theorem, which we restate formally as follows.

\begin{theorem}[Query complexity bound without access to post-measurement states]
    Given access to an unknown measurement $\cM$ without post-measurement states, any protocol requires $\Omega(\sqrt{d})$ queries to distinguishing between the following two cases with probability at least $2/3$:
    \begin{itemize}
        \item $\cM=\cM_\mathrm{c}$,
        \item $\cM=\cM_U$, where $U$ is a Haar random unitary.
    \end{itemize}
\end{theorem}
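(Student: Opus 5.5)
The plan is a hybrid (lazy-sampling) argument. I would compare the joint distribution of the transcript $(\bx,z)$ under $\cM_U$, averaged over Haar $U$, with the distribution under $\cM_\mathrm{c}$, and show that their total variation distance is $\mathcal{O}(T^2/d)$. Then $T=o(\sqrt d)$ queries give success probability at most $1/2+o(1)<2/3$.

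\textbf{Step 1 (lazy sampling of the basis).} I would first write both query maps as instruments acting on the protocol's registers. A classical query is the map
\begin{equation}
\rho\mapsto \tfrac1d\,\Tr_{\mathrm{last}}(\rho)\otimes\ketbra{x}{x}.
\end{equation}
A quantum query is the map
\begin{equation}
\rho\mapsto (\mathbb{I}\otimes\bra{u_x})\rho(\mathbb{I}\otimes\ket{u_x})\otimes\ketbra{x}{x},
\end{equation}
where $u_x=U\ket{x}$. Haar measure allows the columns to be revealed lazily: conditioned on the already-observed distinct outcomes with vectors $u_{x_1},\dots,u_{x_k}$, the remaining columns form a Haar-random orthonormal basis of the complement $P_k^\perp$ of their span. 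Averaging the next step over these unrevealed columns, the event ``fresh outcome $x$'' acts on the memory as
\begin{equation}
\rho\mapsto \tfrac{1}{d-k}\Tr_{\mathrm{last}}\bigl[(\mathbb{I}\otimes P_k^\perp)\rho\bigr]\otimes\ketbra{x}{x},
\end{equation}
with $x$ uniform among the unseen labels. The event ``repeated outcome'' keeps its form with the already-fixed $u_{x_s}$.

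\textbf{Step 2 (hybrid).} I would define hybrids $H_t$ in which the first $t$ queries are quantum and the rest classical. Consecutive hybrids differ only in round $t$. In the idealized quantum process, the only deviations from the classical step are two. First, the projector $P_k^\perp$ removes a $k$-dimensional subspace, and together with the renormalisation $1/(d-k)$ versus $1/d$ this contributes $\mathcal{O}(k/d)$ in trace norm once one notes that the weight on $P_k$ is itself a collision event. Second, outcome labels are restricted to unseen ones. Both contributions are controlled by the probability $q_t$ that round $t$ produces a collision or lands in the span of revealed vectors. Summing gives $\mathrm{TV}\le \sum_t \mathcal{O}(t/d)+\sum_t q_t$.

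\textbf{Step 3 (main obstacle: bounding collisions).} In the classical world $q_t\le t/d$ trivially. In the quantum world the adversary's memory is correlated with the revealed vectors. For example, feeding half of a maximally entangled state leaves the memory in $\overline{u_{x}}$. So I must show that no adaptive strategy can steer the input toward $\mathrm{span}\{u_{x_1},\dots,u_{x_k}\}$ with probability beyond $\mathcal{O}(k/d)$. I would attack this by computing $\mathbb{E}_U\Tr[(\mathbb{I}\otimes P_k)\rho_{t-1}]$ via the lazy-sampling representation. Conditioned on the transcript, $\rho_{t-1}$ depends on the revealed vectors only through operators of the form $(\mathbb{I}\otimes\bra{u_{x_s}})\cdots(\mathbb{I}\otimes\ket{u_{x_s}})$, a polynomial of degree $(1,1)$ in each $u_{x_s}$ and $\overline{u_{x_s}}$. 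The overlap with $P_k$ therefore needs an extra unconjugated $u_{x_s}$ paired with a $\overline{u_{x_s}}$. After invariance under the unitary group fixing nothing the algorithm controls, this pairing is suppressed by $1/d$. I would make this rigorous with a Weingarten-type second-moment bound, or alternatively by replacing the frame by i.i.d.\ Gaussian vectors at cost $\mathcal{O}(T^2/d)$, where the averaged per-step map is exactly the classical channel. Either route should give $q_t=\mathcal{O}(t/d)$, and summing over $t$ yields the claimed $\Omega(\sqrt d)$ bound.
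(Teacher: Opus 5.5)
Your proposal has a genuine gap, and it sits in Step 3, which is not a side lemma but the entire content of the theorem. Steps 1--2 are sound bookkeeping. The fresh-column average is legitimate only because you apply it at the last quantum round of each hybrid, after which nothing depends on $U$. With that, they give $d_{\mathrm{TV}}(H_t,H_{t-1})\le O(t/d)+q_t$, where $q_t$ is the probability, in the genuinely quantum process, that the $t$-th outcome repeats an earlier one. But collision counting is itself a distinguisher, so the claim that an adaptive, entangled strategy cannot create excess collisions is essentially equivalent to the theorem. You leave it at ``should give''.

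The degree-counting heuristic does not close it, for three reasons. First, the unnormalized memory is of degree $(1,1)$ in each $u_{x_s}$ only along collision-free histories. Second, the revealed vectors form an orthonormal frame, not independent vectors. Third, and most importantly, Haar-averaging the degree-$(2,2)$ quantity $\mathrm{Tr}[(\mathbb{I}\otimes u_{x_s}u_{x_s}^\dagger)\tilde\rho]$ produces an identity-type and a swap-type term, but says nothing about the size of the swap-type coefficient. That coefficient is exactly what the strategy controls: your own example of feeding back $\overline{u_{x_s}}$ gives overlap $2/(d+1)$ instead of $1/d$. Bounding it uniformly over adaptive strategies needs a representation in which $U$ enters linearly and positivity plus a normalization can be used. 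Until then there is nothing to which a ``Weingarten-type second-moment bound'' applies.

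The missing idea is the paper's linearization. Each query $\rho\mapsto\mathrm{Tr}_{\mathsf{READ}}(\rho M_x)\otimes|y\rangle\langle y|$ is rewritten through a fixed, $\mathcal{M}$-independent CP map: swap the read register into a fresh $\mathsf{IN}_t$ and attach an unnormalized Bell pair between $\mathsf{READ}$ and a fresh $\mathsf{OUT}_t$. Then for every adaptive protocol $p^{\mathcal{M}}(\mathbf{x},z)=\mathrm{Tr}(A^{\mathbf{x},z}\bigotimes_t M_{x_t})$, with $A^{\mathbf{x},z}\ge0$ independent of $\mathcal{M}$. The Haar average becomes a single $T$-copy twirl of a positive operator. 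Replacing $\mathrm{Wg}$ by $I/d^T$ costs $O(T^2/d)$, and the identity permutation reproduces the classical distribution exactly. The remaining permutations contribute at most $\tfrac12\sum_{\pi\neq e}d^{c(\pi)-T}\le T^2/d$, because $\sum_z\mathrm{Tr}A^{\mathbf{x},z}=d^{T}p^{\mathcal{M}_{\mathrm{c}}}(\mathbf{x})=1$ for every $\mathbf{x}$. This handles adaptivity, entanglement and all collisions at once, so neither the hybrid nor lazy sampling is needed.

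Your Gaussian alternative is the same computation in disguise. The i.i.d. Gaussian moments $\mathbb{E}\bigotimes_t g_{x_t}g_{x_t}^\dagger=d^{-T}\sum_{\pi}\pi$, summed over permutations leaving $\mathbf{x}$ invariant, are exactly the $\mathrm{Wg}\to I/d^T$ approximation. Justifying its $O(T^2/d)$ cost for an adaptive protocol again requires the linearization. The Gaussian-averaged step is also classical only for fresh labels: repeated labels still carry transposition terms, so the cycle counting cannot be skipped.

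If you keep the hybrid and bound each $q_t$ by its own $t$-copy Weingarten estimate, you must use the collision constraint on the label string to gain a factor $t/d$ in the error term. Otherwise the per-round $O(t^2/d)$ errors sum to $O(T^3/d)$ and give only $\Omega(d^{1/3})$.
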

\begin{proof}
    We calculate the joint distribution of the outcomes $(\bx, z)$ when the unknown measurement is $\cM = \{M_x\}_{x \in [d]}$. 
    Our goal is to show that when $T^2 = o(d)$, the total variation distance between the outcome distributions in the two cases is small, making them indistinguishable.

    For convenience, we refer to the first $m$ qubits as register $\memory$ and the last $n$ qubits as register $\read$. 
    For $x, y \in [d]$, we define a completely positive (CP) map $\cM_{x, y}$ from $\memory \otimes \read$ to $\memory \otimes \read$ as 
    \begin{equation}
    \cM_{x, y}(\rho) = \Tr_{\read}(\rho M_{x, \read}) \otimes \ketbra{y}_{\read}.
    \end{equation}
    In other words, $\cM_{x,y} (\rho)$ represents the process of measuring the $\read$ register of $\rho$ using $\cM$, postselecting on outcome $x$, and replacing the $\read$ register with $\ketbra{y}$. 
    Note that $\cM_{x,y}$ is not trace-preserving; in fact, $\Tr(\cM_{x,y}(\rho))$ gives the probability of the postselection event. 
    Let $\cV_t (\cdot)=V_t \cdot V_t^\dagger$. 
    By definition, the probability of obtaining outcome $(\bx, z)$ is 
    \begin{equation} \label{eq:probability}
    p^\cM (\bx, z) = \bra{z}\cV_T \circ \cM_{x_T, x_T} \circ \cV_{T-1} \circ \cM_{x_{T-1}, x_{T-1}} \circ  \cdots \circ \cV_1 \circ \cM_{x_1, x_1}(\rho_0)\ket{z}.
    \end{equation}

    This formula characterizes the outcome distribution of the adaptive protocol. 
    We now relate it to the outcome distribution of a parallel protocol with projection, following a similar approach used in Ref.~\cite{thomas2025extreme}. 
    For each $\cM_{x, y}$, we introduce two ancillary $d$-dimensional registers, $\inp$ and $\outp$. 
    Let $\ket{\Psi}_{\read,\outp}=\sum_{x \in [d]} \ket{x , x}$ denote the (unnormalized) Bell state vector on $\read$ and $\outp$, and let $\SWAP_{\read, \inp}$ denote the swap operator between registers $\read$ and $\inp$. 
    We can then rewrite
    \begin{align}
      \Tr_\read (\rho M_{x,\read})&=\Tr_{\inp} (\SWAP_{\read,\inp} \rho \SWAP_{\read,\inp} M_{x,\inp}),\\
      \ketbra{y}_{\read} &= \bra{y}_{\outp}\ketbra{\Psi}_{\read, \inp} \ket{y}_{\outp}.
    \end{align}
    Therefore, we obtain 
    \begin{equation}\label{eq:choi}
    \cM_{x,y}(\rho)=\Tr_{\inp,\outp}(\mathcal{C}(\rho) \cdot (M_{x,\inp} \otimes \ketbra{y}_{\outp})),
    \end{equation}
    where 
    \begin{equation}
      \mathcal{C}(\rho) = \SWAP_{\read,\inp} \rho \SWAP_{\read,\inp} \otimes \ketbra{\Psi}_{\read, \outp}
    \end{equation}
    is a \emph{completely positive trace-preserving} (CPTP) map from $\memory \otimes \read$ to $\memory \otimes \read \otimes \inp \otimes \outp$. Note that $\mathcal{C}$ is independent of $\cM, x, y$. 
    
    At this point, we
  substitute all occurrences of $\cM_{x_t, x_t}$ in \eqref{eq:probability} with the expression from \eqref{eq:choi}. 
    We denote the $\inp$ and $\outp$ registers for the $t$-th query by $\inp_t$ and $\outp_t$, respectively, and denote $\memory \otimes \read$ by $\total$. 
    We use $\inputs$ and $\outputs$ to denote the collections of all $\inp_t$ and $\outp_t$, respectively. 
    This step yields an operator on $\total \otimes \inputs \otimes \outputs$ given by
    \begin{equation}
      A = \cV_T \circ \mathcal{C} \circ \cV_{T-1} \circ \mathcal{C} \circ \cdots \circ \cV_1 \circ \mathcal{C} (\rho_0).
    \end{equation}
    Since both $\mathcal{C}$ and $\cV$ are CPTP maps, $A$ is a density matrix. 
    We can now express the probability in \eqref{eq:probability} as an inner product:
    \begin{equation}
      p^\cM (\bx, z) = \Tr(A \cdot (\ketbra{z}_\total \otimes M_{\bx,\inputs} \otimes \ketbra{\bx}_{\outputs})),
    \end{equation}
    where $M_{\bx,\inputs}=\bigotimes_{t=1}^T M_{x_t,\inp_t}$. 
    Let $A^{\bx, z}=\bra{z\,\bx}_{\total,\outputs} A \ket{z\,\bx}_{\total, \outputs}$ denote the resulting operator on $\inputs$. 
    Then $p^\cM (\bx, z)=\Tr(A^{\bx, z} M_{\bx,\inputs})$.  
    With this inner product formulation, we are ready to bound the total variation distance between the outcome distributions in the two cases. For this, we find
    \begin{equation}\label{eq:tv-distance1}
    \begin{aligned}
      d_{\TV}(p^{\cM_\mathrm{c}}, \mathbb{E}_U p^{\cM_U})&=\frac{1}{2} \sum_{\bx, z}\left|p^{\cM_\mathrm{c}}(\bx, z) - \mathbb{E}_U p^{\cM_U}(\bx, z)\right|\\
      &=\frac{1}{2} \sum_{\bx, z}\left|\frac{1}{d^T} \Tr(A^{\bx, z})- \mathbb{E}_U \bra{\bx}U^{\otimes T} A^{\bx, z} U^{\dagger \otimes T} \ket{\bx}\right|.
    \end{aligned}
    \end{equation}
    Here, $\mathbb{E}_U U^{\otimes T} A^{\bx, z} U^{\dagger \otimes T}$ represents the Haar twirling of $A^{\bx, z}$.  
    The result of Haar twirling can be expressed as $\sum_{\tau, \pi \in S_T} \Tr(A \tau^{-1}) \text{Wg}_{\tau,\pi}\pi$, where $S_T$ is the symmetric group on $T$ elements, $\pi$ and $\tau$ are permutations (with notation overloaded to also denote their representations on the $T$-fold tensor space), and $\text{Wg}$ is a $T! \times T!$ matrix called the Weingarten matrix. 
    It is known that when $T^2\le d$, $\text{Wg}$ is close to the identity up to a factor of $1/d^T$. 
    Specifically, it has been shown in Ref.~\cite{thomas2025extreme} that $\|d^T\text{Wg}- I_{T!}\|_\infty \le T^2/d$ when $T^2\le d$. 
    Therefore,
    \begin{align}
      &\left\|\mathbb{E}_U U^{\otimes T} A^{\bx, z} U^{\dagger \otimes T}-\sum_{\pi}\Tr(A^{\bx, z} \pi^{-1})\frac{1}{d^T} \pi\right\|_\infty\\
      \nonumber
      =& \left\|\sum_{\tau, \pi} \Tr(A^{\bx, z} \tau^{-1})(\text{Wg}_{\tau,\pi}-\delta_{\tau,\pi}/d^T) \pi\right\|_\infty\\
       \nonumber
      \le&\sum_{\tau, \pi} \left|\Tr(A^{\bx, z} \tau^{-1})\right| \left|\text{Wg}_{\tau,\pi}-\delta_{\tau,\pi}/d^T\right| \|\pi\|_\infty\\
       \nonumber
      \le& \frac{1}{d^T} \sum_{\tau, \pi}\Tr(A^{\bx, z}) \frac{T^2}{d} 
      \\
       \nonumber
      =& \Tr(A^{\bx, z}) \frac{T^2}{d} \frac{(T!)^2}{d^T} \le \Tr(A^{\bx, z}) \frac{T^2}{d} \left(\frac{T^2}{d}\right)^T \le \Tr(A^{\bx, z}) \frac{T^2}{d}.
       \nonumber
    \end{align}
    Substituting this bound into \eqref{eq:tv-distance1}, we obtain
    \begin{align}
      d_\TV (p^{\cM_\mathrm{c}}, \mathbb{E}_U p^{\cM_U})\le&\frac{1}{2} \sum_{\bx, z} \left|\frac{1}{d^T} \Tr(A^{\bx, z}) - \frac{1}{d^T} \sum_{\pi} \Tr(A^{\bx, z} \pi^{-1})\bra{\bx}\pi\ket{\bx}\right| + \frac{1}{2} \sum_{\bx, z} \Tr(A^{\bx, z}) \frac{T^2}{d}\\
       \nonumber
      =&\frac{T^2}{2d} + \frac{1}{2} \sum_{\bx, z} \frac{1}{d^T}\left|\sum_{\pi \ne e} \Tr(A^{\bx, z}\pi^{-1}) \bra{\bx}\pi\ket{\bx}\right|\\
       \nonumber
      \le& \frac{T^2}{2d} + \frac{1}{2} \sum_{\bx, z, \pi \ne e} \frac{1}{d^T} \Tr(A^{\bx, z})\bra{\bx}\pi\ket{\bx}\\
       \nonumber
      =& \frac{T^2}{2d} + \frac{1}{2} \sum_{\bx, \pi \ne e} \frac{1}{d^T} \Tr(\bra{\bx}_{\outputs} A \ket{\bx}_{\outputs}) \bra{\bx}\pi\ket{\bx}\\
       \nonumber
      \le& \frac{T^2}{2d} + \frac{1}{2} \sum_{\bx, \pi \ne e} \frac{1}{d^T} \bra{\bx}\pi\ket{\bx}\\
       \nonumber
      =& \frac{T^2}{2d} + \frac{1}{2} \sum_{\pi \ne e} d^{c(\pi)-T}.
    \end{align}
    In the last line, $c(\pi)$ 
    denotes the number of cycles in 
    the permutation $\pi$. 
    Since $\bra{\bx}\pi\ket{\bx}$ 
    is nonzero only when $\bx$ 
    is constant on each cycle of $\pi$, there are exactly $d^{c(\pi)}$ 
    such vectors $\bx$. 
    Thus, $\sum_{\bx}\bra{\bx}\pi\ket{\bx} = d^{c(\pi)}$. 
    It is known that $\sum_\pi d^{c(\pi)-T}=(1+1/d)(1+2/d)\cdots (1+(T-1)/d)$ (cf.\ Stirling numbers of the first kind). 
    This can be upper bounded by 
    $\exp(T^2/d)$, which is at most $1+2T^2/d$ when $T^2\le d$. 
    Therefore, we have
    \begin{equation}
      d_\TV (p^{\cM_\mathrm{c}}, \mathbb{E}_U p^{\cM_U}) \le \frac{T^2}{2d} + \frac{T^2}{d} = \frac{3T^2}{2d}.
    \end{equation}
    Since the success probability of any algorithm for distinguishing between two distributions is upper bounded by $1/2 + d_{\TV}/2$, we conclude that when $T \le \sqrt{d}/3$, no distinguisher can achieve a success probability greater than $2/3$, completing the proof.
\end{proof}
\end{document}